\newcommand{\gs}{{\bf{s}}}
\newcommand{\gx}{{\bf{x}}}
\newcommand{\gp}{{\bf{p}}}
\newcommand{\gt}{{\bf{t}}}
\newcommand{\MFM}{{\mathfrak M}}
\newcommand{\MCE}{{\mathcal{E}}}
\newcommand{\Z}{\mathbb{Z}}
\newtheorem{proposition}{Proposition}
\newtheorem{theorem}{{Theorem}}
\newtheorem{corollary}{{Corollary}}
\theoremstyle{remark}
\newtheorem{definition}{{\bf Definition}}
\begin{document}


%
%

\title{A Linear Algorithm For Computing Polynomial Dynamical Systems}

%
%
%

\maketitle

Ines Abdeljaoued-Tej\textsuperscript{1,*}, 
Alia Benkahla\textsuperscript{2}, 
Ghassen Haddad\textsuperscript{3}, 
Annick Valibouze\textsuperscript{4}

\bigskip
\noindent
{\small 
{\bf{1} University of Tunis El Manar, Laboratory of BioInformatics, bioMathematics and bioStatistics (BIMS) in Institute Pasteur of Tunis, 1002 Tunis and University of Carthage, ESSAI; Tunisia}
\\
{\bf{2} Laboratory of BioInformatics, bioMathematics and bioStatistics (BIMS), Institute Pasteur of Tunis,
University of Tunis El Manar; Tunisia }
\\
{\bf{3} University of Tunis El Manar, Laboratory of BioInformatics, bioMathematics and bioStatistics (BIMS), Institute Pasteur of Tunis and ENIT; Tunisia \\ Sorbonne University, Laboratory Jacques Louis Lions (LJLL); France }
\\
{\bf{4} Sorbonne University, Laboratoire d'Informatique de Paris 6 (LIP6) and Laboratoire de Statistique Th\'eorique et Appliqu\'ee (LSTA); France }
\\
}
\bigskip
* inestej@gmail.com

\begin{abstract}
Computation biology helps to understand all processes in organisms from interaction of molecules to complex functions of whole organs. Therefore, there is a need for mathematical methods and models that deliver logical explanations in a reasonable time. For the last few years there has been a growing interest in biological theory connected to finite fields: the algebraic modeling tools used up to now are based on Gr\"obner bases or Boolean group. Let $n$ variables representing gene products, changing over the time on $p$ values. A Polynomial dynamical system (PDS) is a function which has several components; each one is a polynom with $n$ variables and coefficient in the finite field $\Z/p\Z$ that model the evolution of gene products. We propose herein a method using algebraic separators, which are special polynomials abundantly studied in effective Galois theory. This approach avoids heavy calculations and provides a first Polynomial model in linear time. 
\end{abstract}

{\noindent\it Keywords: Polynomial dynamical system; Finite field; Mathematical model; Gene regulatory network.\\
Code: 37-XX, 37-E-15, 12E20, 12E05, 11Txx, 92C42, 37N25, 12F10, 12Y05, 11Y40}

\section{Introduction}

A living cell can be compared to a complex factory animated by DNA and large amounts of biological data are now available. Therefore, the challenge is to extract biological meaning: it consists of developing methodologies for using these data to address biological questions \cite{doi:10.1093/bioinformatics/bti565}. 
System biology is a study of several networks related to biology data. We choose gene expression data to design regulatory gene network and to build a mathematical model from observations of the system response to well-constructed perturbations. The data are measurements of concentrations of biochemicals, recorded at time intervals: we propose an adjusted model, defined in a finite field, that fits these data. \\

Several methods have been proposed to infer gene interrelations from expression data. To solve this issue, many researchers have proposed various approaches \cite{Schilstra2004}. In fact, mathematical network modelling is an important step toward covering the dynamic behaviour of biological networks. New areas of mathematics, not traditionally considered applicable, are now contributing with powerful tools. Some applied research were focused on investigating dynamical properties. They propose original analysis for regulatory interactions where Boolean models were generated \cite{Paroni2016,hink}. A linear system of differential equations, obtained from measured gene expression, can be used to infer gene regulatory network \cite{ed}. Bayesian networks are used to measure gene expression data \cite{imoto}, as well as modeling gene expression data \cite{Friedman00usingbayesian}. \\

Polynomial dynamical system is an approach used for understanding the behavior of complex systems over time \cite{thomas:hal-00087681}. All these approaches deal with biological systems which studies and describes the interactions between micro-biological outputs. It finds its roots in symbolic computation and mathematical modelling. One of the precursors of Polynomial dynamical system is R. Thomas with his Boolean dynamical system \cite{thomas}. In the last decade, several studies have been made, including contributions in Hybrid systems biology \cite{Bortolussi2008}. The main objective of the paper is to define an algorithm which computes biological systems, over finite fields, in linear time. Our approach is part of a broader framework of a multidisciplinary team working on an algebraic modelling as well as based on EDO or Bayesian networks. With this in mind, we started in \cite{Aut2008} by presenting a summary of the main methods adapted to this framework. The algebraic method presented in this paper adapts the techniques of Galois theory to issues of bio-systems. This approach can effectively model the important size of the biological data, with relatively simple tools: the calculation of elementary symmetric polynomials in the case of boolean or the fundamental modules in the polynomial case. \\

The first use of Polynomial dynamical system (PDS) on System biology was published by R. Laubenbacher and B. Stigler \cite{lauben03,shortcourseStigler}. Their model is a deterministic graphical model which depends on the degree of data discretization $p$ ($p=2$ in boolean modelling). For example, their model describes the relationships between diseases and symptoms by producing dependency graphs (also known as the wiring diagram \cite{laubenstigler}) and space state graphs. Given symptoms, the space-state graph can be used to compute the presence of various diseases. Thus, when $n$ entries evolve in time, the dynamical function can be represented by $n$ polynomials which describe a table of $p^n$ possible state. Our researches were inspired by a classical method based on Lagrange interpolation \cite{Lagrange:1770}. We propose herein two methods using algebraic separators, both are special polynomials abundantly studied in effective Galois theory. Algebraic separators are directly determined by using symmetric functions, or by linear combinations of fundamental modules. These approaches avoid heavy calculations of Gr\"obner bases and provide a first Polynomial model in linear time, similar to the one produced by the team of R. Laubenbacher. The computation of a first Polynomial dynamical system, as detailed in this article, can perfectly be computed in parallel. Tools based on differential equations, Bayesian networks or Boolean networks are commonly used to model biological networks \cite{cinquemani:hal-01399942}. Unlike these techniques and despite the important work done within R. Laubenbacher's team \cite{jarrah-2006}, the Polynomial modelling remains underexploited. Very few publications are available in the literature that addresses the issue of algebraic modelling in biology. Because of the large amount of information present in a PDS, other network forms can easily be derived from it. Further work is needed to discern models that balance these needs optimally. Issues regarding model personalization and boundary condition tuning are particularly important. Instead, we use computational algebra to discuss the biochemical networks using the example of gene regulatory networks. \\

We propose a linear algorithm that performs learning in Polynomial dynamical systems. The paper offers an effective alternative for Gr\"obner basis when modeling biological systems. We introduce a novel method of modelling based on polynomial over finite fields. The remainder of the paper is organized as follows: Section \ref{sec:00} details an approach allowing separators computation: we start by introducing useful definitions and notations. Section \ref{sec : 2} presents a method based on Galois theory tools as the {\it fundamental modules} or {\it elementary symmetric functions}. Section \ref{sec2:1} is devoted to introduce a main theorem for computing algebraic separators with optimal degree; it allows us to compute a finite number of other Polynomial dynamical systems. Section \ref{sec2:0} summarizes the particular case of Boolean dynamical Systems. Experimental results are given in Section \ref{aff}: some basic techniques that enable us to compute affine separators. Section \ref{sec:443} gives an algebraic technique for defining different types of rules between genes. We illustrate our algorithm with a network where states are in a finite field. The main feature of the data generated by DNA microarrays, called {\it gene expression data}, is that they have few experiments with regard to the number of genes tested simultaneously. Unfortunately, it is difficult to get all the data's necessary information to verify if the model is valid. In order to propose a mathematical model based on the polynomial dynamical systems, we discretize the data by using a double filter for changing genes expressions. The data we have chosen to study, are divided in two lines for two different media. Taking into account the expression data of genes over time, we discusses an example of computing a Polynomial dynamical system for each cell line computed in linear time. Finally,  Section \ref{conc} gives the conclusion with the perspective.

\section{Polynomial dynamical system}
\label{sec:00}

In the following, let $p$ be a prime number and $k$ be a finite field with $p$ elements: $k$ is also noted as $\Z/p\Z$. Let $\mathcal{E}=k^n$ be the set of $p^n$ states.
A {\it Polynomial dynamical system of dimension $n$}, denoted by {PDS}, is a polynomial function $f = (f_1,\dots ,f_n )$ whose components $f_i$ are polynomials of the quotient ring $k\lbrack x_1, \dots ,x_n\rbrack / <x_1^p-x_1,\dots , x_n^p-x_n>$, i.e., polynomials on $n$ variables with coefficients in $k$ and degree smaller than $p - 1$ in each variable. \\

For example, the data commonly used in the modelling of biological systems are concentrations of $n$ proteins and the change of these concentrations over $m$ time points.
We usually need normalized data but even if normalization is sufficient for continuous models, it is inappropriate for discrete models, as it is for Boolean or Polynomial Networks.
Indeed, in the latter cases, a discretization is also necessary. It can be done using E. Dimitrova's method \cite{DBLP:journals/jcb/DimitrovaVML10}. The latter gives an integer $p$, called degree of discretization, and replaces each value by $x_j$ with $j\in [1,n]$, a corresponding value in $\Z/p\Z$.
At each time $t\in \lbrack1,m\rbrack$, the vector $\gs_t=(x_1,x_2,\dots,x_n)$ is called {\it the state of the system at time $t$}. The finite trajectory $\gs_1\mapsto\gs_2\mapsto\dots\mapsto\gs_m$ is called {\it a discrete trajectory of length $m$}. We consider a PDS, which is a polynomial function $f=(f_1,\dots,f_n)$ satisfying, for time $i\in[1,m-1]$: \begin{eqnarray}f(\gs_i)=\gs_{i+1}.\end{eqnarray}

There is an infinite number of such PDS because it is a solution of a multivariable interpolation problem. 
In fact, let $I(V)=\lbrace h\in k[x_1,\dots ,x_n]\mid h(\gs)=0, \forall \gs\in V\rbrace$ be the ideal of affine variety $V\subset {\mathcal{E}}$, $j\in [1,n]$. Let $j\in [1,n]$. For each $i\in[1,m-1]$, $f_j(\gs_i)=\gs_{i+1,j}$. For every $g\in k[x_1,\dots ,x_n]$ there exists $h\in I(V)$ such as $g=f_j+h$. The polynomial $f_j-g\in I(V)$ verifies also $g(\gs_i)=\gs_{i+1,j}$ for $i\in[1,m-1]$.

It is possible to separate an element of $V$ with respect to others thanks to a polynomial of $k[x_1,\ldots ,x_n]$: \begin{definition}
\label{def:0}
Let $V\subset\MCE$ and $\gs$ a state in $V$. A polynomial $r(\gx)\in k[x_1,\ldots ,x_n]$ which equals 1 at $\gx=\gs$ and 0 in the other elements of $V$ is called a (polynomial) {\it separator of $\gs$ in $V$}.
\end{definition}
Separators have the same purpose as Lagrange's polynomials in a univariate interpolation problem. For example, following the results of \cite{shortcourse} and taking into account a Gr\"obner basis's computation of the ideal $I(V)$, we obtain separators $r_j$ of $\gs_j$ in V for $j\in [1,m-1]$:
\begin{eqnarray}
r_1(x_1,x_2,x_3)&=&-2x_1x_3 - 2x_2x_3 + 2x_3^2\\
r_2(x_1,x_2,x_3)&=&x_2^2 -2x_2x_3 +2x_3^2\\
r_3(x_1,x_2,x_3)&=&-2x_1 +2x_2+x_3\\
r_4(x_1,x_2,x_3)&=&-x_2x_3 - 2x_3^2.
\end{eqnarray}
We compute a PDS $f=(f_1,\dots ,f_n)$ by interpolation resolution based on separators: starting from the respective concentrations $(4,3,1,0)$, $(4,3,2,0)$ and $(0,1,3,4)$ of  proteins $1$, $2$ and $3$, we compute a PDS verifying $f_j(\gs_i)=\gs_{i+1,j}$ for $j=1,2,3$ and $i=1,2,3,4$:
{\small
\begin{eqnarray}\label{df}
f_1(x_1,x_2,x_3)=4r_1+3r_2+1r_3+0r_4&=&3 x_2^{2} + 2 x_1x_3 + x_2x_3 - x_3^{2} + 3 x_1 + 2 x_2 + x_3 \\ \label{df2} f_2(x_1,x_2,x_3)=4r_1+3r_2+2r_3+0r_4&=&3 x_2^{2} + 2 x_1 x_3 + x_2x_3 -x_3^{2} + x_1 - x_2 + 2 x_3\\ \label{df3}f_3(x_1,x_2,x_3)=0r_1+1r_2+3r_3+4r_4&=&x_2^{2} - x_2x_3 - x_3^{2} - x_1 + x_2 + 3 x_3 
\end{eqnarray}}

\begin{figure}[h]
\label{Fig12}
   \includegraphics[width=14cm,height=12cm]{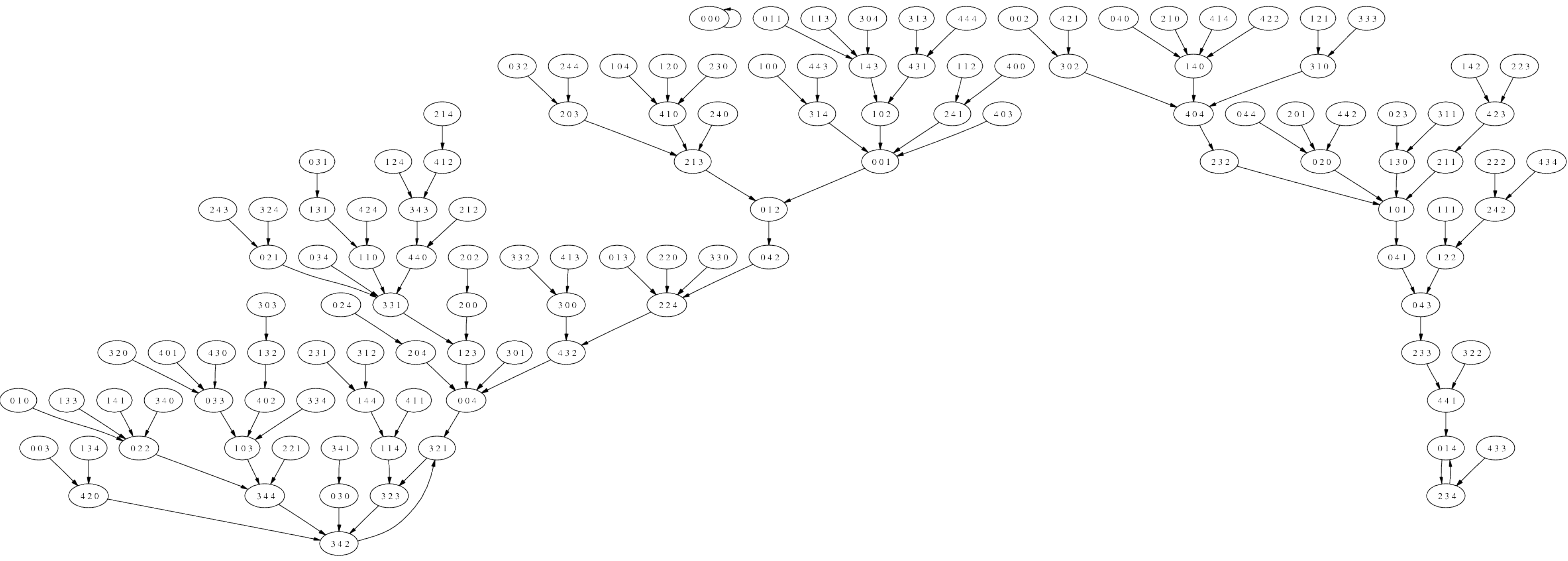}
   \caption{Space state Graph ($n=3$ and $p=5$) with respect to (\ref{df}), (\ref{df2}) and (\ref{df3})}
\end{figure}
From $f=(f_1,f_2,f_3)$, we can deduce all the trajectories using DVD developed in \cite{jarrah-2005}. This example illustrates a Polynomial Networks reproduced by the Buchberger-M\"oller algorithm which is implemented in Computer Algebra Systems like Macaulay2 \cite{M2} or Cocoa \cite{CoCoALib}. 
R. Laubenbacher and B. Stigler had discussed the Lagrange interpolation without pushing the analysis to develop a specific method for computing algebraic separators. However, they adapted Gr\"obner's basis computation of $I(V)$ to all PDS, and in particular, to obtain algebraic separators in $o(n^2m^3+nm^2)$. 

\subsection{Algebraic Separators}
\label{sec : 2}

Let ${\gs} = (s_1,\dots ,s_n) \in \mathcal{E}$ where $\mathcal{E}=k^n$ is the set of $p^n$ states.  
In $k[x_1,\ldots ,x_n]$, the maximal ideal $\MFM_{(\gs)}$ of $\gs$-relations is generated by 
\begin{eqnarray}p_1=x_1-s_1,\quad p_2=x_2-s_2,\quad\ldots\quad  p_n=x_n -s_n\, .\end{eqnarray}
The set $\gp=\{p_1,\dots ,p_n\}$ is called by N. Tchebotarev the set of {\it fundamental modules of $\gs$} \cite{Tchebotarev:50}. The fundamental module of $\gs$ satisfies for all $\gt\in \MCE$:
\begin{eqnarray}
\forall j\in[1,n] \quad p_j(\gt) = 0 \quad \Leftrightarrow \quad \gt=\gs \quad .
\end{eqnarray}
Which means that the variety of $\MFM_{(\gs)}$ is reduced to the single element $\gs$. It is a special case of Galois theory in which the Galois group over the field $k$ of the polynomial $(x-s_1)\cdots (x-s_n)$ is the identity group \cite{Valibouze:08}. Let $J$ be the set of indices $j$ for which all elements of $V=\{\gs_1,\gs_2,\dots ,\gs_m\}$ have the same $j$-th coordinate:
$
J=\{j\in[1,n]  \mid \forall l \in [1,m]\; s_{l,j}=s_{1,j} \}$;
On the coordinates indexed by $j$ no separation is possible. Fix $S=\{1,\ldots ,n\} \backslash J$, the minimum subset of $\{1,\ldots ,n\}$ which separates $V$'s elements, $S$ is called {\it separator set of $V$}.
Note that the set $J$ excludes the gene products (like proteins) whose concentration is constant in time. We keep our calculations for genes products that vary over time: these are $S$'s elements. In particular, for $V=\MCE$, the separator set is $S=\{1,\ldots ,n\}$. Let us consider this first theorem that lights us on the algebraic computation of separators:

\begin{theorem}\label{theo : 3}
Let be the univariate polynomial 
\begin{eqnarray}
g(\gx)=\prod_{j\in S} \prod_{l\in E}(p_j(\gx) -l)
\end{eqnarray}
where $S$ is the separator set of $V$ and $E$ all non-zero values taken by the points of $V$ on the  ideal's generators of $\gs$-relations (except for those unnecessary to separation): 
\begin{eqnarray}
E=\{p_j(\gt) \mid j\in S \;; \;  p_j(\gt)\neq 0 \;; \; \gt \in V  \} \subset \{1,\ldots ,p-1\}.
\end{eqnarray}
Then the polynomial 
\begin{eqnarray}
r(\gx)=\frac{g(\gx)}{g(\gs)}
\end{eqnarray}
is a separator of $\gs$ in $V$.
\end{theorem}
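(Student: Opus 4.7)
The plan is to verify directly the two defining conditions for a separator: $r(\gs)=1$ and $r(\gt)=0$ for every $\gt\in V\setminus\{\gs\}$. The first equality is tautological from the definition $r(\gx)=g(\gx)/g(\gs)$, provided one first checks that the denominator is invertible in $k$, so the quotient makes sense.

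First I would establish that $g(\gs)\neq 0$. Since $p_j(\gs)=s_j-s_j=0$ for every $j\in S$, the product collapses to $g(\gs)=\prod_{j\in S}\prod_{l\in E}(-l)$. The hypothesis $E\subset\{1,\ldots,p-1\}$ ensures that every factor is a nonzero element of $k=\Z/p\Z$, hence $g(\gs)$ is a nonzero scalar and $r(\gs)=1$ follows at once.

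Next I would handle an arbitrary $\gt\in V$ with $\gt\neq\gs$. The key structural step is to exhibit an index $j\in S$ with $p_j(\gt)\neq 0$. Since $\gt$ and $\gs$ are distinct states there is at least one coordinate on which they disagree; but the coordinates indexed by $J$ are, by the very definition of $J$, constant on $V$, hence equal on $\gs$ and $\gt$. Therefore the disagreeing coordinate must lie in $S=\{1,\ldots,n\}\setminus J$. For this $j$ we have $p_j(\gt)=t_j-s_j\neq 0$ with $\gt\in V$, so by the defining description of $E$ the value $p_j(\gt)$ belongs to $E$. Consequently the factor $(p_j(\gx)-p_j(\gt))$ appears in $g(\gx)$ and vanishes at $\gx=\gt$, giving $g(\gt)=0$ and therefore $r(\gt)=0$.

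The whole argument is essentially a direct unwinding of the definitions; there is no genuine obstacle. The one conceptual point worth emphasising is that restricting the outer product to the separator set $S$ rather than to all of $\{1,\ldots,n\}$ is exactly what guarantees both that every $\gt\neq\gs$ in $V$ is killed by one of the factors and that the normalisation $g(\gs)$ remains a product of nonzero scalars.
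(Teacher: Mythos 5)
Your proof is correct and follows essentially the same route as the paper's: show $g(\gs)=\prod_{j\in S}\prod_{l\in E}(-l)\neq 0$ in $\Z/p\Z$, then kill each $\gt\in V\setminus\{\gs\}$ by a factor $p_j(\gx)-p_j(\gt)$ with $p_j(\gt)\in E$. Your only addition is to make explicit why the disagreeing coordinate can be chosen in $S$ rather than merely in $[1,n]$ (coordinates in $J$ are constant on $V$), a point the paper's proof leaves implicit.
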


\begin{proof}
Let $g(\gs)\neq 0$ and so $r(\gs)=1$. In fact, for all $j\in S$ and $l\in E$, $p_j(\gs)-l=-l\neq 0$. The product $\prod_{l\in E}-l\neq 0$ in $k$ because $E\subset  \{1,\ldots ,p-1\}$ and $p$ is a prime number. Let $\gt \in V\backslash \{\gs\}$, then it exists $j\in[1,n]$ where $p_j(\gt)\neq 0$ ; by definition of $E$, $l=p_j(\gt)\in E$ ;  as the $g$'s factor $p_j(\gx)-l\neq 0$ for $\gx=\gt$, we obtain the identity $r(\gt)=g(\gt)=0$~;
\end{proof}

\begin{proposition}
The complexity of calculation proposed in Theorem \ref{theo : 3} equals $ o (n(p-1)) $. 
\end{proposition}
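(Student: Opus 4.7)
The plan is to count explicitly the elementary operations in $k$ needed to write down the polynomial $g(\gx)$ of Theorem \ref{theo : 3} in its factored form together with the scalar normaliser $g(\gs)$, and then to argue that this count is strictly of smaller order than $n(p-1)$ in the asymptotic regime relevant to the paper.

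First I would itemise the work. The factored expression of $g(\gx)$ contains exactly $|S|\cdot|E|$ linear factors of the shape $p_j(\gx)-l = x_j - s_j - l$, and each is produced by $O(1)$ field operations. The scalar $g(\gs)=\prod_{j\in S}\prod_{l\in E}(-l)=(-1)^{|S||E|}\bigl(\prod_{l\in E}l\bigr)^{|S|}$ is obtained in $O(|E|+\log|S|)$ operations by a single pass over $E$ followed by a repeated-squaring exponentiation, and renormalising costs one further inversion in $k$. Hence the total operation count is bounded, up to a constant, by $|S|\cdot|E|$.

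Next I would sharpen the raw inequalities $|S|\le n$ and $|E|\le p-1$ using the combinatorial definition of $E$. Because $E=\{p_j(\gt)\mid j\in S,\ \gt\in V\setminus\{\gs\},\ p_j(\gt)\neq 0\}$, counting the domain of the map $(j,\gt)\mapsto p_j(\gt)$ gives $|E|\le |S|(|V|-1)\le (m-1)|S|\le (m-1)n$, an inequality that pins $|E|$ to the few experimental samples rather than to the field size. In the setting targeted by the paper the number of observed time points $m$ is small and essentially fixed while $p$ is the large discretisation granularity, so $(m-1)n=o(p-1)$ and therefore $|E|=o(p-1)$ uniformly in $|S|$. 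Consequently $|S|\cdot|E|\le n\cdot o(p-1)=o(n(p-1))$, which is the announced bound.

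The main obstacle is making the asymptotic regime precise: the bare bounds $|S|\le n$, $|E|\le p-1$ only yield a big-$O$ estimate, and one needs the combinatorial inequality $|E|\le(m-1)|S|$ to gain a factor strictly vanishing against $p-1$. I would therefore state explicitly that the proposition is understood in the biological regime $(m-1)n=o(p-1)$, where $m$ is a small constant, $n$ is moderate, and $p$ grows, and then conclude that under this convention the factor-by-factor construction of $g$ and the scalar evaluation of $g(\gs)$ together cost $o(n(p-1))$ operations in $k$.
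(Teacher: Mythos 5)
Your operation count for building $g$ and normalising by $g(\gs)$ is sound and is essentially the paper's own proof: the paper simply takes the worst case $card(S)=n$, $card(E)=p-1$ and counts $2n(p-1)$ additions, $(n-1)(p-1)$ multiplications and one division in $\Z/p\Z$. The difficulty is in the second half of your argument. You read the ``$o$'' literally and try to manufacture a strict little-$o$ bound by postulating the regime $(m-1)n=o(p-1)$, with ``$p$ the large discretisation granularity'' and ``$p$ grows''. This is exactly backwards with respect to the paper's own setting: immediately after the proposition the authors write that biological data are such that $p$ and $m$ are very small compared to $n$ (indeed $p=2$ or $p=5$ in every example, while $n$ is the number of genes). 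Under that regime your inequality $|E|\le (m-1)|S|$ gives nothing beyond the trivial $|E|\le p-1$, no factor vanishing against $p-1$ is available, and in the worst case $card(E)=p-1$ the cost is genuinely $\Theta(n(p-1))$; so the statement cannot hold as a literal little-$o$ without an extra hypothesis that the paper contradicts. The authors are using ``$o$'' loosely to mean ``on the order of'' (they do the same later with ``linear complexity in the order of $o(n(p-1)(m-1))$''), and their proof establishes the $O(n(p-1))$ bound by the direct count you already carried out in your first paragraph.

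The concrete gap, then, is that your auxiliary regime assumption is both absent from the paper and incompatible with its stated asymptotics, and without it your argument yields only the $O(n(p-1))$ bound that the worst-case count already gives. Your combinatorial observation $|E|\le\min(p-1,(m-1)|S|)$ is correct and is a genuine (if minor) refinement the paper does not make --- it shows the cost is really $O(n\min(p-1,(m-1)n))$ --- but it cannot rescue a literal little-$o$ in the intended regime where $p$ and $m$ are bounded and $n$ grows.
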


\begin{proof}
Let $card(E)$ be the cardinal of the set $E$ and $card(S)$ the cardinal of the set $S$. In the worst case, computing the univariate polynomial 
$
g(\gx)=\prod_{j\in S} \prod_{l\in E}(p_j(\gx) -l)
$ uses $ card(S) = n $ and $card(E) = p-1$. On the other hand,  $ p_j (x) = x_j-s_j$ and computing a separator $ r (x) $ needs $ 2n (p-1) $ additions, $ (n-1)(p-1)$ multiplications and 1 division in $ \Z / p\Z $.  
\end{proof}

To obtain a PDS associated with a discrete path of length $ m $, it suffices to compute $ m-1 $ separators. The result is not optimal (see Corollary \ref {theo:theo}) but the computational complexity remains linear in $ n $, which is very interesting to find a first study model (remember that biological data are such that $ p $ and $ m $ are very small compared to $ n $). 

\subsection{Optimal Algebraic Separators}
\label{sec2:1}

In the previous section, and specially in Theorem \ref{theo : 3}, we separate repeatedly the same state $\gs$ from a state $\gt$ in $V$. To avoid this redundant operation, let us reduce the degree of $g$. For $\gt\in V$ and $j\in[1,n]$, $p_{\gt,j}(\gx)=x_{j}-t_{j}$. 

\begin{definition}
A {\it separate set of two distinct points $ \gs $ and $ \gt $ of $ V $} is given by:
$$
S(\gs,\gt)=\{j \in S\mid s_{j}\neq t_{j}\}=\{j \in S\mid p_{\gs,j}(\gt)\neq 0\}
$$
where $S$ is the separator set of $V$. The {\it initial set of separators of $\gs$ and $\gt$} is:
$$
PS(\gs,\gt)=\{ p_{\gt,j}(\gx)  \mid j \in S(\gs,\gt)\}.
$$
A \textit{minimal initial separator's set of $\gs$ in $V$}, denoted by $Min(\gs,V)$ is composed of less than $m-1$ distinct polynomials $g$ the same applies to each $\gt \in V$ distinct from $\gs$, the intersection of $Min(\gs,V)$ with $PS(\gs,\gt)$ is reduced to a single element. 
\end{definition}

For every distinct point $\gs$ and $\gt$ in $V$ and for each $g \in PS(\gs,\gt)$, we have $g(\gt)=p_{\gt,j}(\gt)=0$ and $g(\gs)=p_{\gt,j}(s)=-p_{\gs,j}(t)\neq 0$ because $j\in S(\gs,\gt)$.
Any product of initial separators of $ \gs $ and $ \gt $ is a separator between these two points. 

\begin{corollary}
\label{theo:theo}
Let $Min(\gs,V)$ a minimal initial separator's set of $\gs$ in $V$ and let 
\begin{eqnarray}
{\mathcal G}=\prod_{g \in Min(\gs,V)} g
\end{eqnarray}
then 
\begin{eqnarray}
r(\gx)=\frac{{\mathcal G}(\gx)}{{\mathcal G}(\gs)}
\end{eqnarray}
is a separator of $\gs$ in $V$. 
\end{corollary}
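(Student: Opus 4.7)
The plan is to verify the two defining properties of a separator directly from the construction. First I would show that $r(\gs) = 1$, which amounts to proving $\mathcal{G}(\gs) \neq 0$ so that the division by $\mathcal{G}(\gs)$ is legitimate. Each factor $g \in Min(\gs,V)$ lies in some $PS(\gs,\gt)$, and the remark preceding the corollary records that $g(\gs) \neq 0$ for such factors (concretely, $g(\gs) = -p_{\gs,j}(\gt)$ with $j \in S(\gs,\gt)$, which is nonzero by the definition of the separate set). Since $k = \Z/p\Z$ is a field and every factor of $\mathcal{G}$ evaluated at $\gs$ is nonzero, the product $\mathcal{G}(\gs)$ is nonzero, and then $r(\gs) = \mathcal{G}(\gs)/\mathcal{G}(\gs) = 1$.

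Next I would prove that $r(\gt) = 0$ for every $\gt \in V \setminus \{\gs\}$, equivalently that $\mathcal{G}(\gt) = 0$. Here I would invoke the minimality condition: by the definition of $Min(\gs,V)$, the intersection $Min(\gs,V) \cap PS(\gs,\gt)$ contains exactly one element $g_\gt$. That element is of the form $p_{\gt,j}(\gx)$ for some $j \in S(\gs,\gt)$, so $g_\gt(\gt) = t_j - t_j = 0$. Since $g_\gt$ is one of the factors of $\mathcal{G}$, the product vanishes at $\gt$, whence $r(\gt) = 0$. Combining the two properties yields that $r$ is a separator of $\gs$ in $V$ in the sense of Definition~\ref{def:0}.

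I do not expect a serious obstacle here; the statement is essentially a refinement of Theorem~\ref{theo : 3} where the exhaustive double product over $S \times E$ is replaced by a much smaller product chosen to guarantee exactly one vanishing factor per point $\gt \neq \gs$. The only subtle point is making the existence and uniqueness clause of the definition of $Min(\gs,V)$ play its role cleanly: existence ensures $\mathcal{G}(\gt) = 0$ for every $\gt \neq \gs$, while the fact that the chosen factors lie in the various $PS(\gs,\gt)$ ensures that none of them vanishes at $\gs$. Once this structural remark is made, the verification reduces to the two short evaluations above.
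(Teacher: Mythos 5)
Your proof is correct and follows essentially the same route as the paper: the paper does not give a formal proof of the corollary but justifies it in the remark immediately preceding it, namely that every $g\in PS(\gs,\gt)$ satisfies $g(\gt)=0$ and $g(\gs)=-p_{\gs,j}(\gt)\neq 0$, so that the minimality condition forces one vanishing factor per $\gt\neq\gs$ while all factors stay nonzero at $\gs$, which is exactly your two-step verification.
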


We put $r_s=r$ or $r_i =r$ when $\gs$ is indexed by $i$. To illustrate the above definitions, consider the statements
$\gs_{1}=(2,1,2)$, $\gs_{2}=(1,1,0)$, $\gs_{3}=(0,0,1)$ and $\gs_{4}=(1,2,0)$. The separator set is $S=\{ 1,2,3\}$ and the respective separator sets of pair of elements of $V$ are: $
S(\gs_{1},\gs_{2})=\{1,3\}, S(\gs_{2},\gs_{4})=\{2\}$, 
$ S(\gs_{1},\gs_{3})=S(\gs_{1},\gs_{4})=S(\gs_{2},\gs_{3})=S(\gs_{3},\gs_{4})=\{1,2,3\}$. 

We get $\gs_1$'s initial sets of separators:
$
PS(\gs_{1},\gs_{2})=\{x_1-1,x_3\}$, 
$PS(\gs_{1},\gs_{3})=\{x_1,x_2,x_3-1\}$, 
$PS(\gs_{1},\gs_{4})=\{x_1-1,x_2-2,x_3\}$. Here are some minimal initial separator's sets $Min(\gs_1,V)$ of $ \gs_1$ in $ V $ (there is a finite number):
$
\{x_1-1,x_1\}, \{x_1-1,x_2\}, \{x_3,x_2\}$.
We compute a separator $ r_1 $ of $ \gs_1 = (2,1,2) $. From the minimum set $ \{x_3,x_2\} $ of $ \gs_1 $ in $ V $, we obtain the polynomial
$
r_{1}={x_{2}x_3}/{2}=-x_2x_3
$
which equals to 1 in $\gs_{1}$ and 0 everywhere in $V$. This is the same as the separator in \cite{laubenstigler}. A separator of $s_4$ obtained by Corollary \ref{theo:theo} is 
$
r_{4}=(x_{1}-2)(x_2-1)(x_3-1)=x_1x_2x_3 - x_1x_2 - x_1x_3 + x_2x_3 + x_1 - x_2 - x_3 + 1$. \\

The computation of each separator can be done simultaneously \cite{iaa}: in fact, the separators of $V$'s elements are independent from each other. So, the computation of a PDS can be done in parallel by applying for each separator the Theorem \ref{theo : 3}. 
Note that the separators obtained by Corollary \ref{theo:theo} have degree $ \leq n $, while those obtained by Theorem \ref{theo : 3}  have degree $ \leq n.card(E) $. For example, for $ n = 3$ and $ E=\{ 1,2\}$, $ n .card(E) = 6$ and the $\gs_1 $ separator obtained by Theorem \ref{theo : 3}  (with linear complexity) is $ r_{1}=-x_1(x_1-1)(x_2-2)x_2x_3(x_3-1)$.

\subsection{Boolean Dynamical Systems}
\label{sec2:0}

We suppose, for simplicity, that the separator's set $S$ of $V$ is $\{1,\dots ,n\}$. In the case $ p = 2$, $ E $ is reduced to $\{ 1\}$ and separators are in a more compact form. In fact, let $ k = \Z / p \Z $ and let two polynomials in $ k [x_1, \ldots, x_n] $ given by:
\begin{eqnarray*}
q&=&\sum_{r=1}^n e_r(\gp) \quad  \text{ and}\\
r&=& q+1 
\end{eqnarray*}
where $e_1(\gp),\ldots ,e_n(\gp)$ are elementary symmetric functions in the elements of $\gp=\{p_1,\dots ,p_n\}$. To compute, $e_i(\gp)$, we use tools developed in \cite{val:sym}. For all $\gt \in \MCE$
\begin{eqnarray*}
q(\gt)=0 &\Leftrightarrow& \gt = \gs\\
r(\gt)=0 &\Leftrightarrow& \gt \neq \gs.
\end{eqnarray*}
In particular, $q(\gt)=1 \quad \forall \gt\neq\gs$ and $r(\gs)=0$. To see these properties on $ q $ and $ r $, we consider the polynomial $ g $ in univariate $ y $ with coefficients in $ k [x_1, \ldots, x_n] $ whose roots are the fundamental modules, being written as follows:
$$
g(y)(\gx) = \prod_{j=1}^n (y-p_j(\gx)).
$$
The polynomial $ g (1) (x) $ is the same as that of Theorem \ref{theo : 3}.
For $t\in \MCE$, we have the following equivalences:
$$
g(1)(\gt)= 0 \Leftrightarrow \exists j \in[1,n] \; : p_j(\gt) =1 \Leftrightarrow \gt\neq \gs.
$$
So $ g (1) $ is a polynomial separating for $ \gs $ in $\MCE$. Moreover, according to the following identity on the coefficients of univariate polynomials:
$$
g(y)= y^n -e_1(\gp)\cdot y^{n-1}+e_{2}(\gp)\cdot y^{n-2} \cdots + (-1)^n\cdot e_{n}(\gp),
$$
we have $r=g(1)=q+1$ in $\Z/2\Z$. \\

To illustrate this result, let $m=3$, $p=2$, $\gs_1=\left(0,1, 1\right)$, $\gs_2=\left(1, 
 1, 
 0\right)$ and $\gs_3=\left(0, 
 1, 
 1\right)$. The fundamental module sets associated to each state $\gs_i$ is denoted by $\gp_i$; we obtain: 
\[\gp_1=\{ x_{1}, 
 x_{2} + 1, 
 x_{3} + 1\}\quad and\quad\gp_2=\{x_{1} + 1, 
 x_{2} + 1, 
 x_{3}\}.\] The elementary symmetric functions of $\gp_1$ are: \\
$e_1(\gp_1)=x_{1} + x_{2} + x_{3}
$, $e_2(\gp_1)=
x_{1} x_{2} + x_{1} x_{3} + x_{2} x_{3} + x_{2} + x_{3} + 1$ and $e_3(\gp_1)=
x_{1} x_{2} x_{3} + x_{1} x_{2} + x_{1} x_{3} + x_{1}$ ; so a separator of $\gs_1$ is \[ r_1(x_1,x_2,x_3)=x_{1} x_{2} x_{3} + x_{2} x_{3}.\]
In parallel, we find the separator $r_2$ of $\gs_2$: 
\[ r_2(x_1,x_2,x_3)=x_{1} x_{2} x_{3} + x_{1} x_{2}.\]
For $t_1=(1,0)$, $t_2=(1,1)$ and $t_3=(0,1)$, we obtain $f_1=r_1, f_2=r_1+r_2$ and $f_3=r_2$. So, $f=(f_1,f_2,f_3)$ is defined by: \begin{equation}\label{eq:11}
f_1=x_{1} x_{2} x_{3} + x_{2} x_{3}\quad,\quad f_2=x_{1} x_{2} + x_{2} x_{3}\quad{\rm and}\quad f_3=x_{1} x_{2}
x_{3} + x_{1} x_{2}.
\end{equation}

The base of each component of the PDS $ f $ is used to draw the dependency graph (shown in Figure 2). Note that using Cocoa, we must consider the states $ \gs_1 $ and $ \gs_2$ that give $ r_1 = x_3 $ and $ r_2 = x_2 + x_3 $. Hence, taking into account $\gs_3 $: $ f = (x_3, x_2, x_2 + x_3) $. This Boolean dynamical system allows us to obtain the dependency graph in Figure 3 (graph which includes a smaller number of arcs than the PDS (\ref{eq:11})). 
The dependency graph is a directed graph with vertex set $\lbrace
x_1,...,x_n\rbrace$ and edge set $\lbrace (t, x_i) \mid t \in supp(f_i), i = 1,...,n\rbrace$ where the support of $f_i$, denoted $supp(f_i)$, is the set of variables
that appear in $f_i$.

    \begin{figure}
        \includegraphics[scale=0.6]{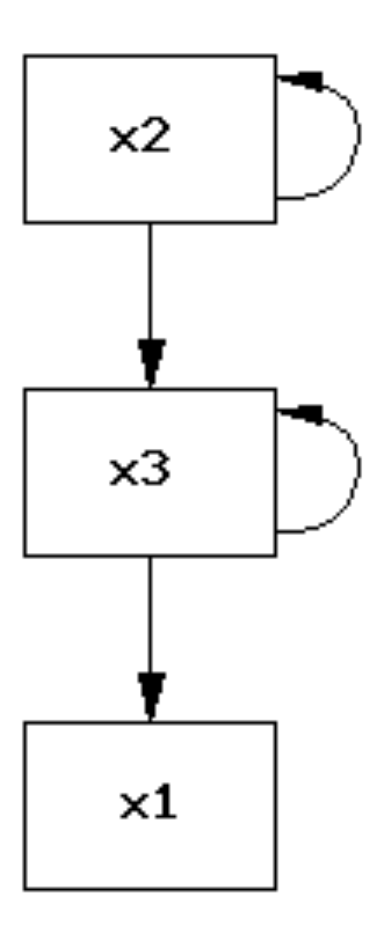}
        \caption{Dependency graph of $f=(x_3,x_2,x_2+x_3)$ where $n=3$, $m=3$ and $p=2$}
        \label{44}
    \end{figure}
    ~\qquad\qquad 
    \begin{figure}
        \includegraphics[scale=0.6]{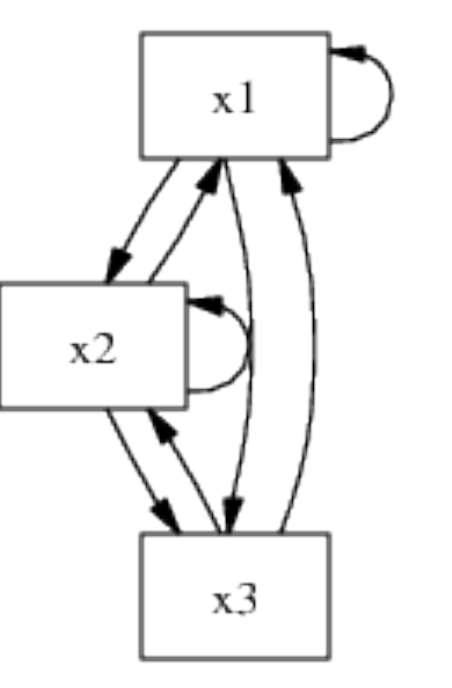}
        \caption{Dependency graph of $f=(x_{1} x_{2} x_{3} + x_{2} x_{3},x_{1} x_{2} + x_{2} x_{3},x_{1} x_{2}x_{3} + x_{1} x_{2})$ where $n=3$, $m=3$ and $p=2$}
        \label{33}
    \end{figure}

\subsection{Affine Separators}
\label{aff}

We detail in this section a method for determining, in some cases, affine separators. Note 
$
CL(\gt)
$
the set of all linear combinations of generators and $ \MFM_ {(\gt)} $ the ideal of 
$ (\gt) $-relations. To find affine separators of $ \gs $ in $ V $, we must determine
$$
(\cap_{\gt \neq \gs}CL(\gt) )\backslash CL(\gs).
$$
This set is a set of affine separators of $ \gs $ in $ V $. To determine if a relationship does not belong to $ \MFM_ {(\gs)} $ (i.e., to $ CL (\gs) $), simply divide in the fundamental modules of $ \MFM_{ (\gs)} $ (or what amounts to even evaluate $ \gs $). Generally \[r_j\in\cap_{i \neq j}\MFM_{(\gs_i)}\quad {\rm and}\quad r_j\notin\MFM_{(\gs_j)},\] for any separator $r_j$ of $\gs_j$ in $V$. \\

Let 
$\gs_{1}=(2,1,2)$, $\gs_{2}=(4,4,0)$, $\gs_{3}=(3,3,1)$, $\gs_{4}=(1,2,3)$ ; $S=\{1,2,3,4\}$. First, we add the generators 2-by-2 ($p_{i,j}+p_{i,k}$), second we subtract ($p_{i,j}-p_{i,k}$) and $2p_{i,j}+p_{i,k}$, etc. Then we pass $p_{i,1}+p_{i,2}+p_{i,3}$ ; for example
{\small 
\begin{eqnarray*}
\MFM_{(\gs_1)}&=&<x_{1} + 3, 
 x_{2} + 4, 
 x_{3} + 3>\\
CL(\gs_{1})&=&\{0,2 x_{1} + 3 x_{2} + x_{3} + 1, 3 x_{1} + 2 x_{2} + 3 x_{3} +
1, 3 x_{1} - x_{2} + x_{3} + 3, \dots , \\ & &
x_{2} - x_{3} + 1, x_{1} + x_{2} + 2 x_{3} + 3, 3 x_{1} + 2 x_{2} + 2, \dots, {\bf 2x_1 + 2x_2 - 1, }\ldots \} \\
\MFM_{(\gs_2)}&=&<x_{1} + 1, 
 x_{2} + 1, 
 x_{3}>\\
CL(\gs_{2})&=&\{ - x_{2} + 4, - x_{1} - x_{2} - x_{3} + 3, 3 x_{1} + 2 x_{2} +
3 x_{3}, 3 x_{1} - x_{2} + x_{3} + 2, \\ & &- x_{1} - x_{3} + 4, 3 x_{1} +
x_{3} + 3, 2 x_{1} + x_{2} + x_{3} + 3, x_{1} + x_{2} + 3 x_{3} + 2, \\ & & 3
x_{1} + x_{2} + 2 x_{3} + 4, x_{1} + 3 x_{2} + x_{3} + 4\quad {\bf 2x_1 + 2x_2 - 1}, 2 x_{1} + 2 x_{2} + 3 x_{3} + 4,\ldots \} 
\end{eqnarray*}}
{\small \begin{eqnarray*}
\MFM_{(\gs_3)}&=&<x_{1} + 2, 
 x_{2} + 2, 
 x_{3} + 4>\\
CL(\gs_{3})&=&\{ - x_{2} + 3, 3 x_{1} + 3 x_{2} + 2, 3 x_{1} + 2 x_{2} + 3
x_{3} + 2, 3 x_{1} - x_{2} + x_{3} + 3,  \\ & & \dots ,x_{1} - x_{2} + 3 x_{3} + 2, x_{1} + 3 x_{3} + 4\} \\
\MFM_{(\gs_4)}&=&<x_{1} + 4, 
 x_{2} + 3, 
 x_{3} + 2
>\\
CL(\gs_{4})&=&\{ - x_{1} - x_{2} - x_{3} + 1, 3 x_{1} - x_{2} + x_{3} + 1, -
x_{1} - x_{3} + 4, 3 x_{1} + 2 x_{2} + 3 x_{3} + 4, \\ & & {\bf 2x_1 + 2x_2 - 1}, x_{1} + 3
x_{3}, 3 x_{1} + 3 x_{2} + x_{3} + 3, - x_{2} + 2 \ldots \}
\end{eqnarray*}}
We quickly find the separator $2x_1 + 2x_2 - 1$ of $\gs_3$ and we can also find a linear identification. $r_3=2x_1 + 2x_2 - 1$ is an element of $(\cap_{i \neq 3}CL(\gs_i) )$ that does not belong to $CL(\gs_3)$. In this example, it is not possible to find affine separators of $ \gs_1 $, $ \gs_2 $ and $ \gs_4$. \\

This method can fully be treated dynamically and there is no question to compute full $ CL (\gt) $. In fact, there is an interesting algorithm for implementation. Several strategies are possible including parallel computing. On the other hand, to avoid repeating the same relationship in the ideal of $ (\gt) $-relations $ \MFM_ {(\gt)} $, we choose a standard format for each polynomial. For example, the polynomial is the same denominator (which we withdrew), it is divided by its content (the gcd of its coefficients) and multiplying the coefficient of the smallest variable by its sign to make it positive. The authors of \cite{shortcourse} explain that the order of $ x_i $ can be predetermined by information concerning the influence of certain genes over others: e.g. $x_{1}<x_{2}<\cdots <x_{n}$. \\
Thus, we obtain a first model in the form of a PDS with linear complexity in the order of $ o (n. (p-1). (m-1)) $. Many other PDS can be determined with our approach optimally. In the other hand, getting all Polynomial dynamical systems from one PDS computed algebraically is also possible: a calculation of the generators of the ideal $I(V)=\{h\in k[\gx]\quad \mid h(\gs)=0\quad ,\forall \gs\in V\}$. The computation of all the PDS through the determination of all separators: a separator $ r (\gx) $ of $ \gs $ in $ V $ follows from the choice of an element of all \[\{r \notin\MFM_{(\gs)}\quad {\rm and}\quad r\in\MFM_{(\gt)} \quad \forall \gt\in V, \gt\neq \gs\}\quad .\]

\section{Application to System biology}
\label{sec:443}

Let us begin with a brief descriptions and reminders of some of the important concepts, as well as of our example data. Once that has been done; we show how this information can be carried out using PDS. DNA repair pathways maintain the integrity of the genome. There are at least 150 different proteins that catalyze DNA repair \cite{Alberts1319}. Conversely, the requirement for DNA repair and genome maintenance in response to therapy implicates DNA repair proteins. This offers novel approaches for tumour selective treatment intended to relieve or heal a disorder. To seed new therapies, researches are needed to explore the detailed consequences of an alteration in each of these repair pathways in general, and in deficient cell lines in particular. Different cell lines provide essential tools to address this need. At first, we selected genes regarding the homologous recombinations pathway \cite{rad}. \\ 

The used information is compared with respect to the evolution of genes in different cell lines, according to the data at different time \cite{art:brca2}. Furthermore, we consider the evolution of each gene independently considering the discretization of their expression \cite{Irizarry01042003}. We assume that given the small changes in gene expressions, Boolean discretization is sufficient for our model. Indeed, we calculate the average expression of each gene over time and we discretize in two Boolean states with respect to this average.
Then, we selected genes that have consistent behaviour compared to the first filter. After this filtering, we have kept the genes common to two cell lines, noted $C_1$ and $C_2$. Among these genes, we chose to keep those who behave differently on each cell lines: we cnsider for simplicity 5 genes in Table \ref{tab:gene5}, for which we compute a model for each cell line, which allows us to compare their evolution, over time. 

\begin{table}[h]
\begin{center}
\begin{tabular}{|c||c|c|c||c|c|c|} \hline
Gene\ cell lines &$C_1$ at 0h &$C_1$ at 24h &$C_1$ at 72h &$C_2$ at 0h &$C_2$ at 24h &$C_2$ at 72h\\ \hline
$g_1$ & 1&1&0&1&0&0\\ \hline
$g_2$ & 1&1&0&1&0&0\\ \hline
$g_3$ & 0&1&1&1&1&0\\ \hline
$g_4$ & 1&0&0&1&0&1\\ \hline
$g_5$ & 1&1&0&0&1&0\\ \hline
\end{tabular}
\caption{Example of 5 genes, common to 2 cell lines $C_1$ and $C_2$ in 3 steps of time.}
\label{tab:gene5}
\end{center}
\end{table}

We can now assume that the number of variables appearing in each component of the PDS will make the difference. What counts is to have a model that sticks most to the realities of data. We obtain two models:  $f_{C_1}$ describes $C_1$ cell lines and $f_{C_2}$ describes $C_2$ cell lines. Simulation and computation were obtained by using SageMath \cite{SteinJoyner2005, sagemath}. 
\begin{eqnarray*}f_{C_1} & = &(0,0, -x_{1} x_{2} x_{3} x_{4} x_{5} + x_{1} x_{2} x_{3} x_{4}, \\
& &-x_{1} x_{2}x_{3} x_{4} x_{5} + x_{1} x_{2} x_{3} x_{5} + x_{1} x_{3} x_{4} x_{5} +
x_{2} x_{3} x_{4} x_{5} - x_{1} x_{3} x_{5} - x_{2} x_{3} x_{5} - x_{3}
x_{4} x_{5} + x_{3} x_{5}, \\
& &-x_{1} x_{2} x_{3} x_{4} x_{5} + x_{1}x_{2} x_{3} x_{4})\\
f_{C_2} & = & (-x_{1} x_{2} x_{3} x_{4} x_{5} + x_{1} x_{2} x_{4} x_{5},  -x_{1} x_{2}
x_{3} x_{4} x_{5} + x_{1} x_{2} x_{4} x_{5}, x_{1} x_{2} x_{3}
x_{5} + x_{1} x_{2} x_{4} x_{5}, \\
& & 0, -x_{1} x_{2} x_{3} x_{4} x_{5}
+ x_{1} x_{2} x_{4} x_{5})
\end{eqnarray*}
Here, $x_1$ is the concentration of $g_1$, $x_2$ the concentration of $g_2$, $x_3$ the concentration of $g_3$, $x_4$ the concentration of $g_4$ and $x_5$ the concentration of $g_5$. Then we deduce a wiring diagram and a state space graph of the given input data. Describing a gene network in terms of Polynomial dynamical system has advantages. First, it describes gene interactions in an explicitly numerical form. Second, these are casual relations between genes: a coefficient $x_i$ in a function $f_j$ determines the effect of gene $i$ on gene $j$. 

\section{Perspective and Conclusion}
\label{conc}

Many biological systems are modeled with discrete models. From the research that has been carried out, it is possible to conclude that effective alternative for Gr\"obner basis when modeling biological systems is realizable. We use classical method based on Lagrange's interpolation. This paper details an approach allowing separators's computation: we present a method based on Galois theory's tools as the {fundamental modules} or {elementary symmetric functions}. \\
The findings have directly practical relevance. We propose an algorithm that performs learning in  Polynomial dynamical systems. For this purpose, we introduce a main theorem for computing algebraic separators with optimal degree, it allows us to compute a finite number of other Polynomial dynamical systems. We also introduce some basic techniques that enable us to compute affine separators. \\

In this context, we presented an analytical method of easily readable expression and easily interpretable specific data. Thus, we obtain a first model in the form of a Polynomial dynamical systems with linear complexity. Many other Polynomial dynamical systems can be determined with our approach optimally. Besides, getting all polynomial dynamical systems from one computed algebraically is also possible: a calculation of the generators of the ideal $I(V)$. The calculation of all the PDS through the determination of all separators. The gain is that very quickly we propose models to bio-informatics and molecular biologists in which they can advance and refine their queries. \\

Clearly, further research will be required on experimental data. Continuing research on this field appears fully justified because of the simplicity of this approach. 
Finally, there is only one parameter $p$ introduced into the model, unlike the continuous model using differential equations which must be added a number of constraints and parameters for successful modelling.



\end{document}